\theoremstyle{plain}
\newtheorem*{lem*}{Lemma}
\newtheorem*{theorem*}{Theorem}
\theoremstyle{definition}
\newtheorem{definition}{Definition}
\begin{document}

\title{Functional Type Expressions of Sequential Circuits \\ with the Notion of Referring Forms}

\author{Shunji Nishimura}
\email{s-nishimura@oita-ct.ac.jp}
\orcid{0000-0001-6600-5136}
\affiliation{%
  \institution{National Institute of Technology, Oita College}
  \streetaddress{Maki 1666}
  \city{Oita City}
  \state{Oita Prefecture}
  \country{Japan}
  \postcode{870-0152}
}


\begin{abstract}
This paper introduces the notion of referring forms as a new metric for analyzing sequential circuits from a functional perspective.
Sequential circuits are modeled as causal stream functions, the outputs of which depend solely on the past and current inputs.
Referring forms are defined based on the type expressions of functions and represent how a circuit refers to past inputs.
The key contribution of this study is identifying a universal property in multiple clock domain circuits using referring forms.
This theoretical framework is expected to enhance the comprehension and analysis of sequential circuits. 
\end{abstract}

\begin{CCSXML}
<ccs2012>
   <concept>
       <concept_id>10003752.10003753.10003760</concept_id>
       <concept_desc>Theory of computation~Streaming models</concept_desc>
       <concept_significance>300</concept_significance>
       </concept>
   <concept>
       <concept_id>10003752.10003753.10003765</concept_id>
       <concept_desc>Theory of computation~Timed and hybrid models</concept_desc>
       <concept_significance>300</concept_significance>
       </concept>
 </ccs2012>
\end{CCSXML}

\ccsdesc[300]{Theory of computation~Streaming models}
\ccsdesc[300]{Theory of computation~Timed and hybrid models}

\keywords{sequential circuit, stream function, causal function}


\maketitle

\section{Introduction}

The primary expectation of digital circuits is functionality.
We are concerned with how the circuit produces output when a sequential series of input values is provided, from an external perspective.
Although the internal structure of a circuit that realizes this functionality is important, it can be considered as secondary.
Further theoretical studies on the functionality of these circuits are warranted. 
We discuss digital circuits from a theoretical and functional perspective, without focusing on practical issues such as circuit design for a particular application.
In this study, sequential circuits are viewed as causal functions, the outputs of which depend solely on past and current inputs and not on future inputs.
We adopt type expressions to investigate these functions.

A previous study \cite{ICCSS2023} adopted a similar approach, but with three profound differences.
(1) Regarding background time, the previous study allowed an arbitrary partially ordered set as time, whereas this study considers stream functions; that is, only natural (ordinal) numbers are regarded as time, and this difference simplifies type expressions.
(2) This study proposes the notion of referring forms, which appeared only implicitly in the previous paper.
(3) The notion of time preservation was also proposed in the previous study, and multiple clock domain circuits were classified as time preserving without proof.
In this study, the proof is provided.

The contribution of this work is that it proposes a new metric to investigate sequential circuits, namely referring forms, accompanied by the revelation of a universal property of multiple clock domain circuits.
We expect that these results will enhance the understanding of sequential circuits.

\section{Preliminaries}
\subsection{Types of Causal Stream Functions}
From a theoretical perspective, sequential circuits are described using Mealy machines (or Mealy automata) \cite{mealy1955method,holcombe2004algebraic}.
For arbitrary sets $I$ and $O$, a Mealy machine $(S,\psi)$ with input $I$ and output $O$ consists of a set $S$ of states and transition function $\psi : S\to (O\times S)^I$. When a previous state $s\in S$ and current input $i\in I$ are provided, a tuple of the current output and state $\psi\,s\,i \in O\times S$ is derived. For a given initial state, the Mealy machine $(S,\psi)$ behaves as a causal stream function of type
$I^{\mathbb{N}_1} \to O^{\mathbb{N}_1}
$,
where $\mathbb{N}_1$ denotes the natural numbers $\{1,2,\cdots\}$, excluding zero, as ordinal numbers.
(We also use $\mathbb{N}_0$ for $\{0,1,\cdots\}$ later.)
Domain $I^{\mathbb{N}_1}$ and codomain $O^{\mathbb{N}_1}$ represent infinite input and output streams over time; that is, the input and output signals, respectively. Causal means that the current output depends only on past and current inputs and not on future inputs.

This function is constructed from the transition function $\psi$ of the Mealy machine, as follows: Given an initial state $s_0$ and a finite input stream $(i_1,i_2,\cdots,i_n)\in I^n $, the output stream $(o_1,o_2,\cdots,o_n)\in O^n$ is determined by $(o_j,s_j)=\psi\,s_{j-1}\,i_j\;(j\in\mathbb{N}_1)$.
Conversely, \cite{rutten2006algebraic} proved that an arbitrary causal stream function $f: I^{\mathbb{N}_1} \to O^{\mathbb{N}_1}$ can be represented using the corresponding Mealy machine.
Therefore, we investigate the causal stream functions $I^{\mathbb{N}_1} \to O^{\mathbb{N}_1}$ instead of Mealy machines.

    We consider a causal stream function $f\,:\,I^{\mathbb{N}_1} \to O^{\mathbb{N}_1}$.
Because it is causal, the $n$-th output $o_n$ depends on an $n$-length input stream $(i_1,i_2,\cdots,i_n)$, and $f$ can be expressed by a family of functions $\{\,f_n\,:\,I^n\to  O\,\}_{n \in \mathbb{N}_1}$.
We denote this function family as
\begin{align}
I^+\to O
\label{eq:IplusToO}
\end{align}
to simplify the later expressions.

We must consider the product type of the input to develop our theory further.
When the input $I$ of the causal stream functions (\ref{eq:IplusToO}) is converted into the product $I_A\times I_B$, it becomes $(I_A\times I_B)^+ \cong I_A^{\;+} \times I_B^{\;+}$ and the type is isomorphic to

\begin{align}
I_A^{\;+}\to I_B^{\;+}\to O
\label{eq:IaIbO}
\end{align}
by Currying. Note that the two pluses above must be interpreted as the same number; that is, $\{\,I_A^{\;\,n} \to I_B^{\;\,n} \to O\,\}_{n \in \mathbb{N}_1}$ in the expression of function families, which differs from the usual definition $A^+=A+(A\times A)+\cdots+A^\mathbb{N}$.

\subsection{Dependent Types}
We also require notions of the dependent types \cite{martin1975intuitionistic,chlipala2022certified} for sets and elements instead of types and terms.
For given $A \in Set$ and $\alpha: A \to Set$, a subset $(a: A) \times \alpha\,a \,\subset\, A \times {\displaystyle \sum_{x\in A}}\,\alpha\,x$ is defined as
\begin{align}
(x,y) \in (a: A) \times \alpha\,a \enspace\;\text{iff}\;\enspace x\in A \;\text{and}\; y\in \alpha\,x,
\end{align}
where $(a : A)$ represents a domain $A$ and an arbitrary $a$ is obtained from $A$ for later expressions.
In this study, $(a : A)$ denotes a domain with dependent types, as explained above, and $a\in A$ denotes the usual proposition.

\section{Referring Forms}

\subsection{Definition}

For functions with two-part inputs, such as (\ref{eq:IaIbO}), we introduce the notion that a given first input restricts the second input domain.
To express this notion, we use dependent types and consider
\begin{align}
\label{eq:abstractPlus}
\begin{gathered}
f:(\sigma : C^+) \to \phi\, \sigma \to O \\
\phi : C^+ \to \mathcal{P}(I^+), 
\end{gathered}
\end{align}
where $\mathcal{P}$ denotes the power sets. We use the control input $C$ and data input $I$ instead of $I_A$ and $I_B$ in (\ref{eq:IaIbO}).
The second domain of the upper expression in (\ref{eq:abstractPlus}) is not the entire data input $I^+$, but only part of it, as if it were filtered based on the control input $\sigma$.
For example, for given $\phi_n : C^n \to \mathcal{P}(I^n)$ and $\sigma_n = (c_1,c_2,\cdots,c_n)$, $\phi_n\,\sigma_n := I\times\cdots\times I$ means that $f_n\,\sigma_n$ can refer to all past inputs, whereas $\phi_n\,\sigma_n:=\emptyset\times\cdots\times\emptyset\times I$ means that $f_n\,\sigma_n$ can refer to the current input only.
In addition, we assume that a given first value $\sigma : C^+$ of $f$ does not influence the final function $\phi\, \sigma \to O$; that is, $\phi\,\sigma = \phi\,\sigma'$ implies $f\,\sigma = f\, \sigma'$.

Our model can be made more precise by focusing on the first and final time steps.
In the first time step, no past is referred to; thus, $C^+$ must be modified to $C^*$. In the final time step, the current input is always referred to, because we consider Mealy machines.
Finally, we obtain the following definition, originally proposed in \cite{ICCSS2023}:

\begin{definition} (Domain restriction.) \label{def:DR}
For given control input $C$ and data input $I$, Mealy machines as causal stream functions with domain restrictions, as follows:
\begin{align}
\label{eq:DR}
\begin{gathered}
f:(\sigma : C^*) \to \phi\, \sigma \times I \to O \\
\phi : C^* \to \mathcal{P}(I^*),
\end{gathered}
\end{align}
and $\phi$ is known as a restriction map.
\end{definition}

The current input $I$ is always provided, and the restriction map $\phi$ considers only past inputs and not current inputs.
A restriction map $\phi$ comprehends how the circuit refers to the past with the passage of time.

The notion of domain restriction (\ref{eq:DR}) contains the control signal $\sigma \in C^*$; however, to study the reference method, we only require the image $\phi\,\sigma$.
We omit the control signal $\sigma$ from the expression of the domain restriction (\ref{eq:DR}), which helps us to observe how the circuit refers to past inputs easily.

\begin{definition}(Referring forms.)
For a causal stream function that can be expressed as
\begin{align}
(\rho : \mathcal{P}(I^*)) \to \rho \times I \to O,
\label{eq:RF}
\end{align}
$\rho$ is the \textit{referring form}.
\end{definition}
Roughly, referring forms represent a memory element of the target circuit and a given control signal to the memory element; they provide the manner in which the circuit refers to past inputs.
When (\ref{eq:RF}) is precisely expressed as a family of functions,
\begin{align}
\{(\rho : \mathcal{P}(I^n)) \to \rho \times I \to O \}_{n \in \mathbb{N_0}},
\end{align}
and the referring form is $\{\rho_n : \mathcal{P}(I^n) \}_{n\in\mathbb{N}_0}$.
The set $\{\rho_n\}_{n\in\mathbb{N}_0}$ can be regarded as $\rho : (n : \mathbb{N}_0) \to \mathcal{P}(I^n)$ and approximately $\rho : \mathbb{N}_0 \to \mathcal{P}(I^{\mathbb{N}_0})$.

\subsection{Examples}

\begin{figure*}
    \centering
    \begin{minipage}{0.29\textwidth}
        \centering
        \includegraphics[width=0.9\linewidth]{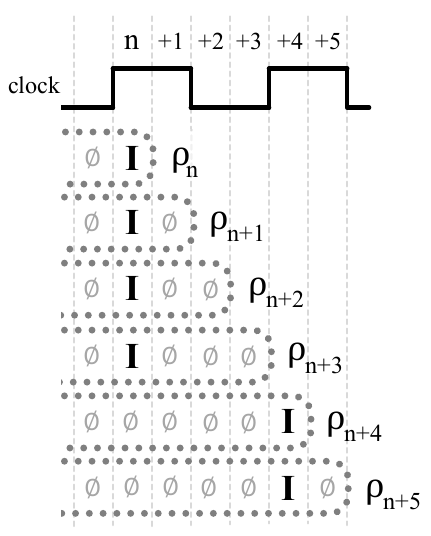}
        \caption{Referring form of D-FF.}
        \label{fig:RefDFF}
   \end{minipage}\hfill
   \begin{minipage}{0.64\textwidth}
	   \subfigure[Block diagram]{
        \includegraphics[clip, width=0.45\columnwidth]{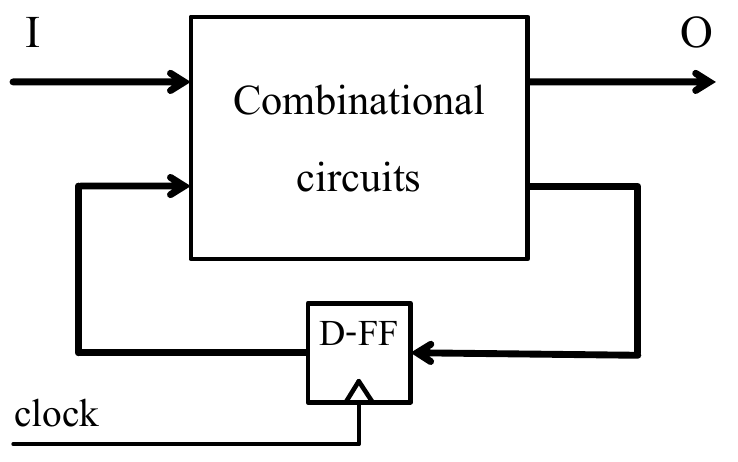}
        }
      \subfigure[Referring form]{
		  \includegraphics[clip, width=0.5\columnwidth]{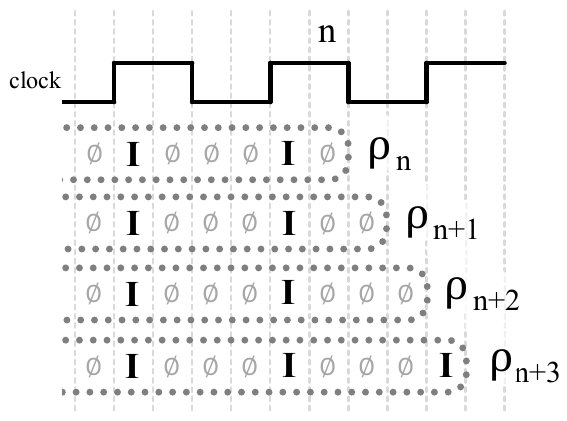}
        }
      \caption{Synchronous circuits.}
      \label{fig:Sync}
    \end{minipage}
    \begin{minipage}{0.34\textwidth}
        \centering
        \includegraphics[width=0.8\linewidth]{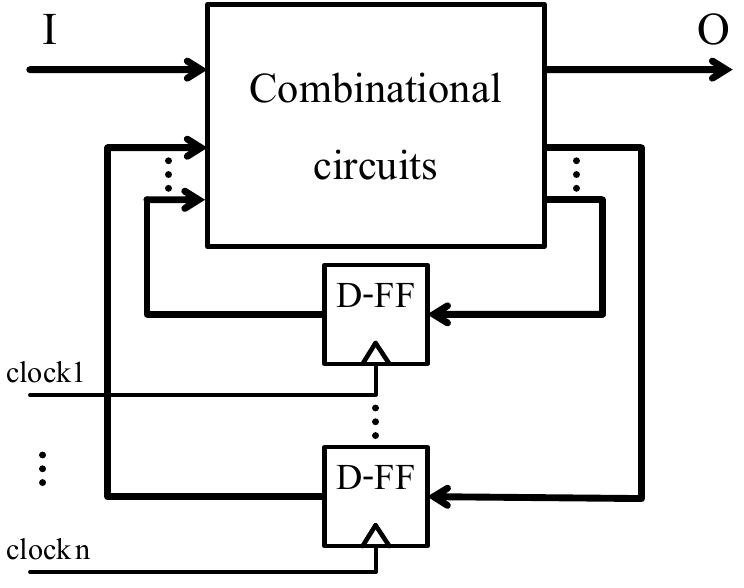}
        \caption{Multiple clock domain circuits.}
        \label{fig:DiagMCD}
    \end{minipage}
    \begin{minipage}{0.64\textwidth}
        \subfigure[Block diagram]{
            \includegraphics[width=0.50\linewidth]{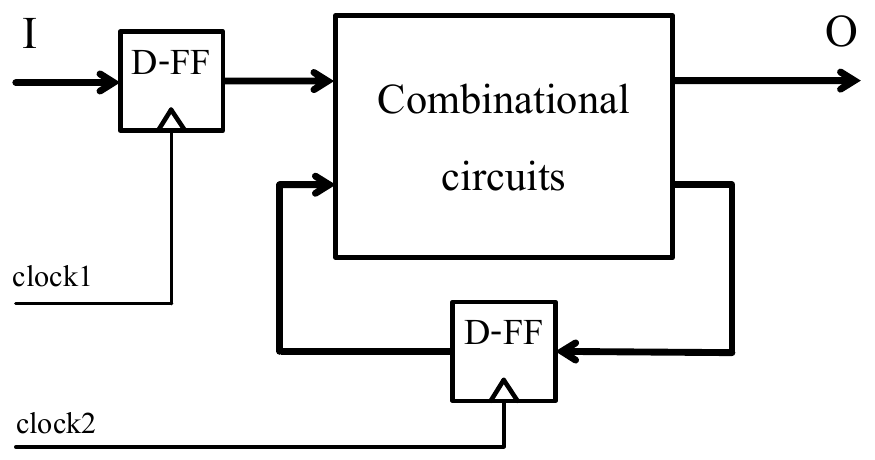}
            }
        \subfigure[Referring form]{
            \includegraphics[width=0.45\linewidth]{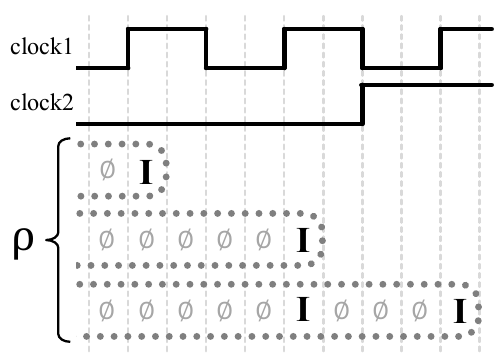}
            }
        \caption{Two-clock domain circuit.}
        \label{fig:2CDin}
    \end{minipage}
\end{figure*}

We present an example of referring forms for a D-flip-flop (D-FF) as a sequential circuit, with data input $I$, output $O$, and a clock.
Because the D-FF holds the input value at the latest edge of the clock, the referring form $\rho$ behaves as shown in Fig. \ref{fig:RefDFF}; during time $n,\cdots,n+3$, $\rho$ provides the input value at $n$, following which $\rho$ provides the input value at $n+4$.
An excerpt from the overall type expressions is as follows:
\begin{align*}
\cdots \times\emptyset\times I \enspace\enspace\enspace\,&\,\qquad\quad (\times I) \to O \;(\text{at}\,n+1)\\
\cdots \times\emptyset\times I \times\emptyset&\qquad\quad \; (\times I) \to O \;(\text{at}\,n+2)\\
\cdots \times\emptyset\times I \times\emptyset&\times\emptyset\quad\;\; (\times I) \to O \;(\text{at}\,n+3),
\end{align*}
where $(\times I)$ means that the current input is not referred to by the D-FF.
Note that the referring form shown in Fig. \ref{fig:RefDFF} is regarded as $\mathcal{P}(I)^*$ because $\mathcal{P}(I)^*\cong\mathcal{P}(I^*)$ and may be intuitively comprehensible.
Henceforth, this isomorphism is sometimes used implicitly. 

Owing to the abstractness of types, we can express a class of circuits such as synchronous circuits, as shown in Fig. \ref{fig:Sync}(a), in an expression; the referring form is shown in (b).
The input $I$ is held and can be referred to at every edge of the clock.
Unlike D-FFs, synchronous circuits can potentially refer to all past inputs at the clock edges through their feedback loop.

Next, we consider multiple clock domain circuits, as shown in Fig. \ref{fig:DiagMCD}. With the general description, their referring form becomes the same that of as Fig. \ref{fig:Sync}(b) for synchronous circuits because 
every FF may hold data from the input $I$ and outputs from all of the FFs.
In fact, when multiple clock domains are required, multiple parts should independently have different clock domains; for instance, Fig. \ref{fig:2CDin}(a) has $clock 1$ to latch input $I$, followed by the general $clock 2$ domain circuit.
The referring form in Fig. \ref{fig:2CDin}(b) indicates that the latest input $I$ latched by $clock 1$ is delivered to the $clock 2$ domain when $clock 2$ ticks.

\section{Time Preservation}

In the previous section, each referring form was studied individually. In this section, we investigate the universal properties of a set of referring forms representing a particular circuit.
For a set of referring forms $R$, each referring form $\rho\in R$ is $\rho : \mathbb{N}_0 \to \mathcal{P}(I^{\mathbb{N}_0})$, and we denote the unified image $\bigcup\limits_{\rho\in R} \rho (\mathbb{N}_0)$ as $R(\mathbb{N}_0)$.
As $\mathbb{N}_0$ has an order, we can consider whether the unified image has an order preserved by each $\rho\in R$.

\begin{definition} (Time preservation.)
A set of referring forms $R$ is \textit{time preserving} when image $R(\mathbb{N}_0)$ has a partial order and each $\rho\in R$ becomes order preserving.
\label{def:TimePreserving}
\end{definition}
Time preservation can be interpreted as the image $R(\mathbb{N}_0)$ that maintains the concept of time in the original environment.

We focus on multiple clock domain circuits, as shown in Fig. \ref{fig:DiagMCD}, and study their referring forms in detail.
To generate the output $O_m$ at $m\in\mathbb{N}_0$, the circuits can refer to the previous input $I_{m-1}$ if it has been latched by some FFs, as well as to the earlier input $I_{m-2}$ if it has been latched at $m-2$ and held at $m-1$.
This approach can be described using a pseudo-directed graph, as shown in Fig. \ref{fig:GraphMCDgeneral}, which allows us to determine which inputs can be referred to each output by tracking arrows.
The nodes $FF_i (i=1,\cdots,n)$ represent D-FFs, which connect only one of their two input arrows to the output arrow depending on their attribute latch or hold, as with a multiplexer.
The circular nodes represent combinational circuits, and in this context, we consider their connections rather than their logic.
When we consider these nodes as all input arrows connected to all output arrows, the subject circuits become abstract multiple clock domain circuits.
However, when we consider that these nodes are one of every possible input-output connection, this corresponds to considering arbitrary circuits in the form of Fig. \ref{fig:DiagMCD}, and we take the latter stance.
Thus, the pseudo-graph in Fig. \ref{fig:GraphMCDgeneral} describes referring forms of multiple clock domain circuits (Fig. \ref{fig:DiagMCD}).

\begin{figure*}
    \centering
    \begin{minipage}{0.45\textwidth}
    \includegraphics[width=0.8\linewidth]{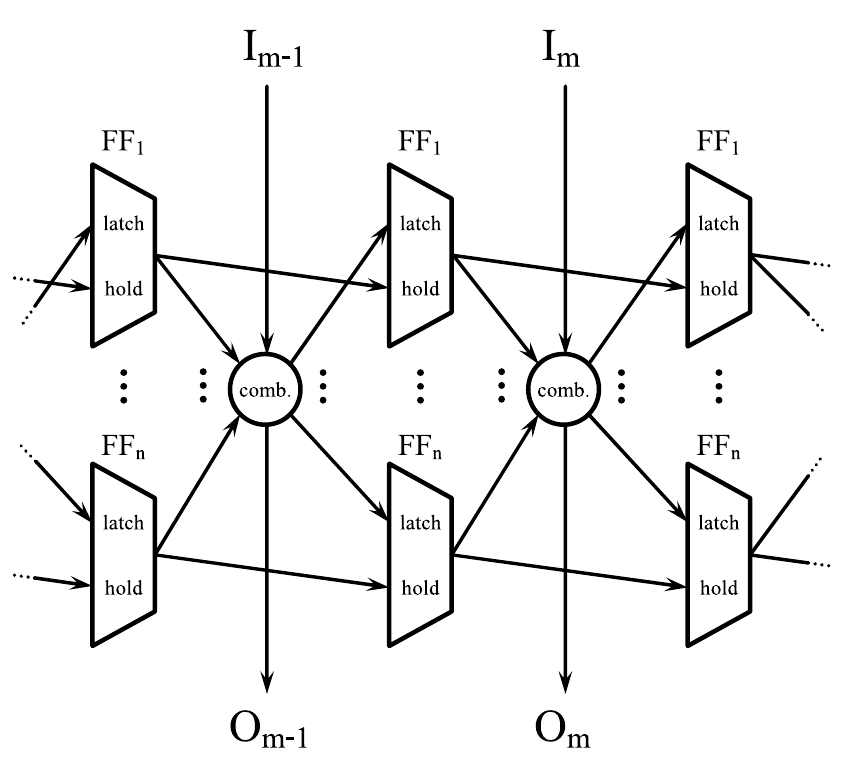}
    \caption{Pseudo-graph for referring forms of multiple clock domain circuits (Fig. \ref{fig:DiagMCD}).}
    \label{fig:GraphMCDgeneral}
\end{minipage}
\begin{minipage}{0.45\textwidth}
    \centering
    \includegraphics[width=0.8\linewidth]{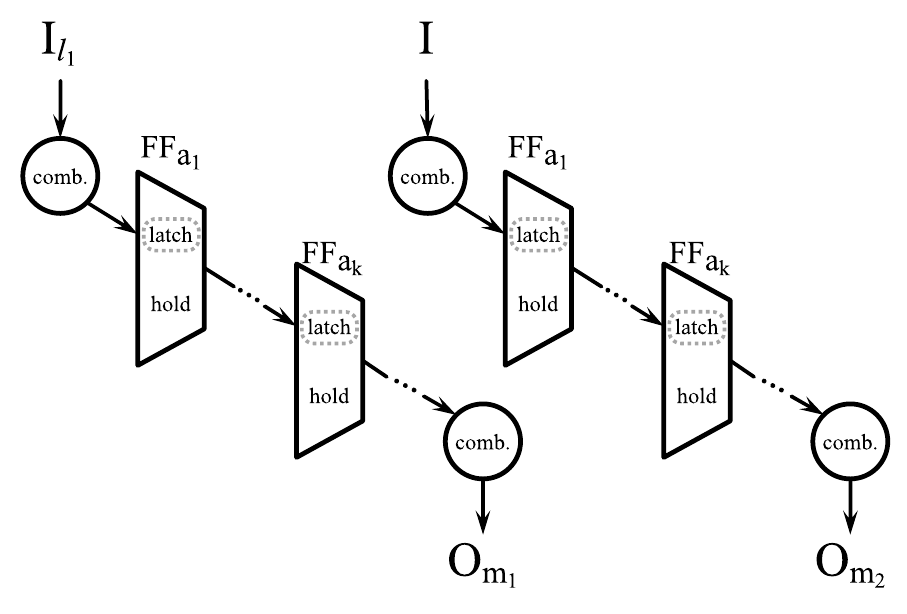}
    \caption{Pseudo-graph to prove the theory.}
    \label{fig:GraphProof}
\end{minipage}
    \subfigure[Block diagram]{
        \includegraphics[width=0.25\linewidth]{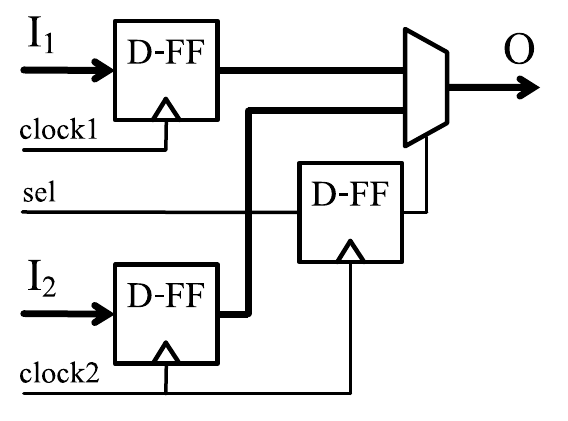}
        }
    \subfigure[Referring form $\rho$]{
        \includegraphics[width=0.25\linewidth]{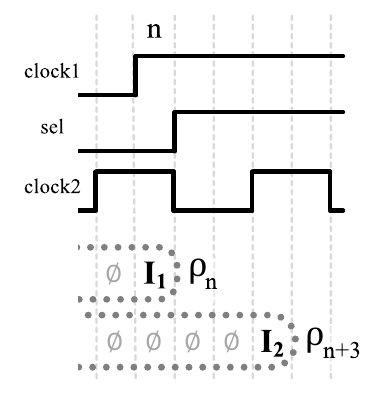}
        }
    \subfigure[Referring form $\rho'$]{
        \includegraphics[width=0.35\linewidth]{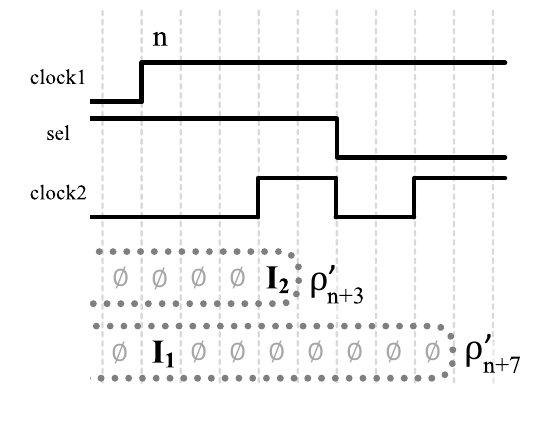}
    }
    \caption{Selective memory circuit.}
    \label{fig:2mem}
\end{figure*}

\begin{lem*}
Let $\rho$ be a referring form of multiple clock domain circuits. If $\rho_{m_1}$ contains $I$ at $l_1$, for an arbitrary $m_2 \ge m_1$, there exists $l_2 \ge l_1$ and $\rho_{m_2}$ contains $I$ at $l_2$.
\label{lem:the}
\end{lem*}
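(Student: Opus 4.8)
The plan is to recast the statement as a reachability property in the pseudo-graph of Fig.~\ref{fig:GraphMCDgeneral} (with its proof companion Fig.~\ref{fig:GraphProof}) and then argue by induction on the gap $m_2 - m_1$. Concretely, I would first make precise that ``$\rho_m$ contains $I$ at $l$'' means there is a directed path in the pseudo-graph from the input node $I_l$ to the output node $O_m$; such a path necessarily threads through a chain of $FF$ nodes, one per time step between $l$ and $m$, where at each step the chain uses either a hold arrow (the $FF$ keeps its stored value) or a latch arrow (the $FF$ captures the value currently presented on its data input). With this reading, the lemma reduces to the single-step claim: if some $I_l$ is referrable at $O_m$, then some $I_{l'}$ with $l' \ge l$ is referrable at $O_{m+1}$. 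Iterating this $m_2 - m_1$ times then yields the general statement, the intermediate indices forming a non-decreasing chain $l_1 \le \cdots \le l_2$.

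For the single step I would track the node $FF_i$ that carries the $I_l$-reference into $O_m$ and examine its attribute at time $m+1$. If $FF_i$ holds, its stored value is unchanged, so the same input time $l$ remains referrable and the output combinational node at $m+1$ (which is wired to every $FF$ output, cf. Fig.~\ref{fig:Sync}) can read it, giving $l' = l$. If instead $FF_i$ latches, it captures the value currently on its latch input; since in the pseudo-graph every latch input is fed from a combinational node driven by the current data input $I_{m+1}$, the newly latched value refers to the current time, so $l' = m+1 > l$. In either case a reference to an input time $l' \ge l$ survives into $O_{m+1}$, which is exactly the single-step claim.

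The step I expect to be the crux is this latch case, because a priori an $FF$ that re-latches could overwrite a relatively fresh stored value with an older value copied from another $FF$, which would break monotonicity. The resolution I would rely on is the fixed connectivity encoded in Fig.~\ref{fig:GraphMCDgeneral}: by construction every $FF$'s latch source is a combinational node driven by the current input $I$ together with all $FF$ outputs (``every FF may hold data from the input $I$ and outputs from all of the FFs''), so the value captured on any latch always refers to the current time step, whose index strictly exceeds every previously stored index. Equivalently, the quantity ``the most recent input time referrable from $FF_i$'' never decreases: a hold preserves it and a latch resets it to the current time. Making this invariant precise on the pseudo-graph, and checking that the output node indeed reads whichever $FF$ retains the freshest reference so that the surviving $l'$ genuinely appears in $\rho_{m_2}$, is the main technical content; once it is in place, the induction on $m_2 - m_1$ closes the argument immediately.
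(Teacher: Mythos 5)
Your overall strategy (reduce to a single-step monotonicity claim and induct on $m_2-m_1$) is genuinely different from the paper's, which argues by contradiction: it assumes $(\rho_{m_2})_{l_1},\dots,(\rho_{m_2})_{m_2}$ are all $\emptyset$, fixes the ``earlier path'' $I_{l_1}\to FF_{a_1}\to\dots\to FF_{a_k}\to O_{m_1}$, and then traces backward from $O_{m_2}$ along that same chain of FFs, forcing each $FF_{a_j}$ to have latched at some point and ultimately forcing $FF_{a_1}$ to latch the input at some time after $l_1$, a contradiction. However, your proposal has a genuine gap at exactly the point you flag as the crux. You resolve the latch case by asserting that ``every latch input is fed from a combinational node driven by the current data input $I_{m+1}$,'' so that a latch always resets the stored reference to the current time. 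That assertion holds only under the all-to-all reading of the combinational nodes, which the paper explicitly rejects: it takes the stance that each circular node realizes ``one of every possible input--output connection,'' i.e.\ an arbitrary but fixed wiring, precisely so that the class covers arbitrary circuits of the form of Fig.~\ref{fig:DiagMCD}. Under that stance an FF's latch source may be wired only to other FFs and not to $I$ at all, so a latch can capture a value whose most recent input reference is older than the current time, and indeed older than the value it overwrites. Your single-FF invariant (``the most recent input time referrable from $FF_i$ never decreases: a hold preserves it and a latch resets it to the current time'') therefore cannot be established by inspecting $FF_i$ in isolation, and its latch half is simply false in the intended model.

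The statement is still salvageable along your lines, but the repair needs more than you provide: let $P_j(t)$ be the latest input time referred to by the content of $FF_j$ at time $t$, and prove by strong induction on $t$, \emph{simultaneously over all FFs}, that every $P_j$ is non-decreasing. The latch case then reads $P_j(t)=\max_{s\in\mathrm{src}(j)}P_s(t-1)\ge\max_{s\in\mathrm{src}(j)}P_s(t_0-1)=P_j(t-1)$, where $t_0$ is the previous latch time of $FF_j$ and the inequality invokes the induction hypothesis for the \emph{source} FFs rather than for $FF_j$ itself. Monotonicity of $\max_j P_j(m)$ over the fixed set of FFs read by the output node then yields the lemma. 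Either supply this simultaneous induction or adopt the paper's backward-tracing contradiction; as written, the single-step claim your induction rests on does not go through.
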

\begin{proof}
We can prove this by contradiction.
Assume that \\
$(\rho_{m_2})_{l_1},\cdots,(\rho_{m_2})_{m_2}$ are $\emptyset$.
According to $(\rho_{m_1})_{l_1} = I$, we assume that the path on the left side in Fig. \ref{fig:GraphProof}, in which $I_{l_1}$ is connected to $O_{m_1}$ through $FF_{a_1},\cdots,FF_{a_k}$, which we name an ``earlier path.''
    Subsequently, we focus on the path towards $O_{m_2}$ as the right-hand side path in Fig. \ref{fig:GraphProof}, and we observe that the latest FF $FF_{a_k}$  is connected to $O_{m_2}$ via the combination node because of the assumption.
    If the attributes of $FF_{a_k}$ are all held after the earlier path, the path towards $O_{m_2}$ is connected to the earlier path and $(\rho_{m_2})_{l1}$ becomes $I$; thus, a latch attribute exists at some point.
Next, $FF_{a_{k-1}}$ is connected to $FF_{a_k}$ and we replicate the same argument.
Finally, $FF_{a_1}$ is connected to input $I$ at some point after $l_1$, which contradicts this assumption.
\end{proof}

We can now state the theorem of a universal property for multiple clock domain circuits.

\begin{theorem*}
The referring forms for multiple clock domain circuits (as shown in Fig. \ref{fig:DiagMCD}) are time preserving.
\label{th:the}
\end{theorem*}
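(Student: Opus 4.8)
The plan is to equip the unified image $R(\mathbb{N}_0)$ with an explicit partial order and then obtain order preservation of every $\rho\in R$ as an immediate consequence of the Lemma, which already carries the combinatorial weight. First I would fix the encoding: by the isomorphism $\mathcal{P}(I)^*\cong\mathcal{P}(I^*)$ used throughout, each value in the image is determined by the set of past positions it references, so I identify every $x\in R(\mathbb{N}_0)$ with its index set $\mathrm{Ref}(x)=\{\,l : x \text{ contains } I \text{ at } l\,\}$. By causality $\mathrm{Ref}(x)$ is finite, so its latest referenced position $\lambda(x)=\max\mathrm{Ref}(x)$ is well defined, with the convention $\lambda(x)=0$ when $x$ references nothing.

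Next I would define the order by pulling back the usual order of $\mathbb{N}_0$ along $\lambda$: set $x\preceq y$ iff $\lambda(x)\le\lambda(y)$. Reflexivity and transitivity are inherited at once from $\le$ on $\mathbb{N}_0$. Antisymmetry is the one delicate point, since $\lambda(x)=\lambda(y)$ does not force $x=y$ for arbitrary index sets. For the circuits of Fig.~\ref{fig:DiagMCD} I would argue that the referenced set is recovered from its latest slot --- as the D-FF and synchronous examples illustrate, where $\mathrm{Ref}$ is a singleton, respectively an initial segment --- so that $\lambda$ is injective on $R(\mathbb{N}_0)$ and $\preceq$ is a genuine (indeed total) partial order; in the general case one instead passes to the quotient of $R(\mathbb{N}_0)$ by the fibres of $\lambda$, on which $\preceq$ descends to an honest partial order.

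It then remains to invoke the Lemma for order preservation. Fixing $\rho\in R$ and $m_1\le m_2$, I would apply the Lemma with $l_1=\lambda(\rho_{m_1})$: it supplies some $l_2\ge l_1$ with $\rho_{m_2}$ containing $I$ at $l_2$, whence $\lambda(\rho_{m_2})\ge l_2\ge l_1=\lambda(\rho_{m_1})$. Thus $\lambda$ is non-decreasing along each $\rho$, that is $\rho_{m_1}\preceq\rho_{m_2}$, so every $\rho\in R$ is order preserving and $R$ is time preserving by Definition~\ref{def:TimePreserving}.

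I expect the real obstacle to lie not in the monotonicity, which the Lemma delivers almost verbatim, but in the partial-order requirement itself. Because the Lemma constrains only the maximal referenced index and says nothing about the earlier slots, any order finer than the one induced by $\lambda$ --- for example a lexicographic comparison of the full referenced set --- may fail to be preserved, as an output that references $\{5\}$ at a late time but referenced $\{3,5\}$ earlier already shows. Reconciling antisymmetry with the coarse information the Lemma provides, whether through the injectivity-of-$\lambda$ argument for the circuits at hand or the quotient construction in general, is therefore the step demanding the most care.
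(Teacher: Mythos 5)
Your monotonicity step is fine --- the Lemma applied to $l_1=\lambda(\rho_{m_1})$ does give $\lambda(\rho_{m_2})\ge\lambda(\rho_{m_1})$ --- but the proof does not close, because the relation $x\preceq y \iff \lambda(x)\le\lambda(y)$ is only a preorder on $R(\mathbb{N}_0)$, and neither of your two escape routes repairs this. The injectivity of $\lambda$ fails for general multiple clock domain circuits: with two flip-flops on different clocks, one control schedule can put the value referencing positions $\{3,5\}$ into the image while another puts in the value referencing $\{4,5\}$; both have $\lambda=5$ yet are distinct, so $\preceq$ is not antisymmetric. (Your D-FF and synchronous examples are exactly the degenerate cases --- singleton and initial-segment reference sets --- where the latest slot happens to determine everything; they do not represent the general situation of Fig.~\ref{fig:DiagMCD}.) The quotient construction, on the other hand, changes the object: Definition~\ref{def:TimePreserving} requires a partial order on $R(\mathbb{N}_0)$ itself, not on $R(\mathbb{N}_0)$ modulo the fibres of $\lambda$, so descending to the quotient proves a weaker statement. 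You correctly identified antisymmetry as the crux, but you have not actually supplied it.

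The paper takes a different and sharper route that you should compare against. It does not order the image by $\lambda$; it takes the order \emph{generated} by the maps themselves (declare $\rho_{m_1}\le\rho_{m_2}$ whenever $m_1\le m_2$), so that order preservation of each $\rho$ holds by construction, and the entire burden falls on showing this generated relation is antisymmetric. For that it uses the Lemma not once but iteratively over \emph{all} the $I$-occurrences: given a putative $2$-cycle $\rho_{m_1}=\rho'_{l_2}$, $\rho_{m_2}=\rho'_{l_1}$ with $m_1\le m_2$, $l_1\le l_2$, the latest $I$ in $\rho_{m_1}$ is pushed forward to a no-earlier occurrence in $\rho_{m_2}$ and then back into $\rho_{m_1}$, forcing equality of the latest positions; repeating for the second-latest occurrence, and so on, forces the full reference sets to coincide, i.e.\ $\rho_{m_1}=\rho_{m_2}$. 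That is precisely the extra work needed to get antisymmetry of the whole reference set rather than only of its maximum, and it is the step your proposal is missing.
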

\begin{proof}
Let $R$ be the set of referring forms.
We aim to define the order on $R(\mathbb{N}_0)$ as the order derived by each $\rho\in R$, but we must confirm its well-definedness by contradiction.
Assuming that antisymmetry does not hold, we have $\rho,\rho' \in R$ and $m_1,m_2,l_1,l_2 \in \mathbb{N}_0$ s.t.
\[
m_1\le m_2,\;l_1\le l_2,\; \rho_{m_1}=\rho'_{l_2}, \; \rho_{m_2} = \rho'_{l_1},\; \rho_{m_1}\neq\rho_{m_2}.
\]
Focusing on the latest $I$ appearance in $\rho_{m_1}$, according to the Lemma, $I$ appears later in $\rho_{m_2}$.
As $\rho_{m_2}=\rho'_{l_1}$, another later $I$ appears in $\rho'_{l_2}$, and from $\rho'_{l_2}=\rho_{m_1}$, those $I$ appearances occur at the same time.
For the second-latest $I$ appearance, we can reprise the same argument, and finally, we obtain $\rho_{m_1}=\rho_{m_2}$, which contradicts the assumption.
\end{proof}

Finally, we present a non-time-preserving example.
Fig. \ref{fig:2mem}(a) is a circuit with two data inputs $I_1$ and $I_2$, three control inputs $clock 1, clock 2$, and $sel$, and two memories that can be read selectively.
Two referring forms, $\rho$ and $\rho'$, are shown in Fig. \ref{fig:2mem}(b) and (c), respectively, and assuming that both undescribed pasts are the same, we cannot provide the order between $\rho_n=\rho'_{n+7}$ and $\rho_{n+3}=\rho'_{n+3}$ on the united image of the referring forms.

\section{Conclusion}

We have introduced referring forms as a novel approach for investigating the functional properties of sequential circuits.
Referring forms capture the transition behavior of circuits, specifically how past inputs are used to generate current outputs.
We defined the concept of time preservation for referring forms and demonstrated that multiple clock domain circuits exhibit this property.
This study provides a behavioral perspective on sequential circuits and offers new insights into their analysis and design.
We hope that future research will explore further theoretical developments and practical applications of referring forms in digital circuits.

\bibliographystyle{ACM-Reference-Format}
\bibliography{the}

\end{document}